\newcommand{\psfrage}[1]{{\color{blue}{\sf[PS: #1]}}}
\newcommand{\hpfrage}[1]{{\color{violet}\sf[HP: #1]}}
\newcommand{\swfrage}[1]{{\color{teal}\sf[SW: #1]}}
\renewcommand{\psfrage}[1]{} \renewcommand{\hpfrage}[1]{} \renewcommand{\swfrage}[1]{}
\newcommand{\seq}[1]{\left\langle #1\right\rangle}
\newcommand{\seqGilt}[2]{\left\langle #1\gilt #2\right\rangle}
\newcommand{\Id}[1]{\ensuremath{\text{{\sf #1}}}}
\newcommand{\set}[1]{\left\{ #1\right\}}
\newcommand{\gilt}{:}
\newcommand{\setGilt}[2]{\left\{ #1\gilt #2\right\}}
\newcommand{\binomial}[2]{\binom{#1}{#2}}
\newcommand{\nat}{\mathds{N}}
\newcommand{\nplus}{\mathbb{N}_+}
\newcommand{\realrange}[2]{\left[#1, #2\right]}
\newcommand{\unitrange}[2]{\realrange{0}{1}}
\newcommand{\Oh}[1]{\mathcal{O}\!\left( #1\right)}
\newcommand{\Ohsmall}[1]{\mathcal{O}(#1)}
\newcommand{\Th}[1]{\Theta\!\left( #1\right)}
\newcommand{\Om}[1]{\Omega\left(#1\right)}
\newcommand{\Omsmall}[1]{\Omega(#1)}
\newcommand{\discussionsize}{\small}
\newcommand{\labelcommand}{}
\newsavebox{\buchalgorithmparam}
\newcounter{lineNumber}
\newenvironment{buchalgorithmpos}[3]{%
\renewcommand{\labelcommand}{#2}%
\renewcommand{\captiontext}{#3}%
\sbox{\buchalgorithmparam}{\parbox{\textwidth}{#3}}%
\begin{figure}[#1]\begin{code}\setcounter{lineNumber}{1}}
{\end{code}%
\caption{\label{\labelcommand}\captiontext}\end{figure}}
\newenvironment{code}{\noindent%
\begin{tabbing}%
\hspace{1.5em}\=\hspace{1.5em}\=\hspace{1.5em}\=\hspace{1.5em}\=\hspace{1.5em}\=\hspace{1.5em}\=\hspace{1.5em}\=%
\hspace{1.5em}\=\hspace{1.5em}\=\hspace{1.5em}\=\hspace{1.5em}\=\hspace{1.5em}\=%
\kill}{\end{tabbing}}
\newcommand{\Class}{{\bf Class\ }}
\newcommand{\Array}{{\Id{Array}\ }}
\newcommand{\Of}{\ensuremath{\text{\bf of\ }}}
\newcommand{\Function} {{\bf Function\ }}
\newcommand{\Funct}[3]{\Function #1\Declare{{\rm (}{#2\rm )}}{#3}}
\newcommand{\Procedure}{{\bf Procedure\ }}
\newlength{\mynegthinspace}
\newlength{\mysmallspace}
\newcommand{\While}    {{\bf while\ }}
\newcommand{\Do}       {{\bf do\ }}
\newcommand{\For}      {{\bf for\ }}
\newcommand{\Is}{\ensuremath{\mathbin{:=}}}
\newcommand{\ForFromTo}[3]{{\For $#1 \Is #2$ \To $#3$ \Do}}
\newcommand{\To}       {{\bf to\ }}
\newcommand{\If}       {{\bf if\ }}
\newcommand{\Then}     {{\bf then\ }}
\newcommand{\Else}     {{\bf else\ }}
\newcommand{\Return}   {{\bf return\ }}
\newcommand{\Boolean}  {\ensuremath{\set{\True,\False}}}
\newcommand{\True}     {{\bf true}}
\newcommand{\False}    {{\bf false}}
\newcommand{\Not}       {\ensuremath{\neg}}
\newcommand{\Or}       {\ensuremath{\vee}}
\newcommand{\Div}       {{\bf\ div\ }}
\newcommand{\Increment}{\raisebox{.12ex}{\hbox{\tt ++}}}
\newcommand{\Decrement}{\raisebox{.12ex}{\hbox{\tt -}{\tt -}}}
\newcommand{\Rem}[1]   {{\bf //\hspace{0.5mm}{\rm#1}}}
\newcommand{\RRem}[1]   {\`{\bf //\hspace{0.5mm}~}{\rm#1}}
\newcommand{\Declare}[2]{#1\mbox{ \rm : }#2}
\newcommand{\DeclareInit}[3]{#1\ensuremath{{}={}}#3\mbox{ \rm : }#2} 
\newdimen\endofsize\endofsize=0.5em
\crefname{listing}{Algorithm}{Algorithms}
\crefname{lstlisting}{Algorithm}{Algorithms}
\Crefname{lstlisting}{Algorithm}{Algorithms}
\crefname{@theorem}{Theorem}{Theorems}
\Crefname{@theorem}{Theorem}{Theorems}
\newcommand{\myparagraph}[1]{\subparagraph*{#1}}
\newcommand{\anonymous}[1]{\ifx\authoranonymous\relax\textcolor{red}{Anonymous}\else{#1}\fi}
\let\oldcite\cite
\renewcommand\cite{\unskip~\oldcite}
\newcommand{\backyard}{\ensuremath{T'}}
\newcommand{\maxOffset}{\hat{o}}
\newcommand{\maxB}{\hat{B}}
\newcommand{\threshold}{\ensuremath{\delta}}
\newcommand{\maxThreshold}{\ensuremath{\hat{t}}}
\newcommand{\meta}{\ensuremath{M}}
\newcommand{\mytitlerunning}{Sliding Block Hashing (Slick) -- Basic Algorithmic Ideas}
\title{Sliding Block Hashing (Slick)\\ -- Basic Algorithmic Ideas}
\titlerunning{\mytitlerunning}
\author{Hans-Peter Lehmann}{Karlsruhe Institute of Technology, Germany}{hans-peter.lehmann@kit.edu}{https://orcid.org/0000-0002-0474-1805}{}
\author{Peter Sanders}{Karlsruhe Institute of Technology, Germany}{sanders@kit.edu}{https://orcid.org/0000-0003-3330-9349}{}
\author{Stefan Walzer}{Cologne University, Germany}{walzer@cs.uni-koeln.de}{https://orcid.org/0000-0002-6477-0106}{}
\newcommand{\myauthorrunning}{Lehmann, Sanders, Walzer}
\authorrunning{\myauthorrunning}
\keywords{compressed data structure, succinct data structure, hashing, external hash table}
\begin{document}
\maketitle

\begin{abstract}
We present {\bf Sli}ding Blo{\bf ck} Hashing
(Slick), a simple hash table data structure that
combines high performance with very good space
efficiency.  This preliminary report outlines
avenues for analysis and implementation that we
intend to pursue.
\end{abstract}

\section{Introduction}

Hash tables support the management of a set of
elements under search, insertion and deletion with
all these operations working in (expected)
constant time.  They are one of the most widely
used data structures and often important for the
performance of an application.

A painful tradeoff between space consumption and
speed of the used hash tables is one reason why so
many different variants are being considered. We
introduce {\bf Sli}ding blo{\bf ck} hashing
(Slick) which makes this tradeoff much less
punishing by allowing high performance even with
very small space overhead. This works both in
theory and practice \hpfrage{Not yet} and can be achieved already
with quite simple implementations.

Slick mostly adheres to the successful approach of
\emph{closed hashing} where elements are directly
stored in a table $T$. Perhaps the most widely
used closed hashing approach is \emph{linear
probing} \cite{Pet57,Knu98,SMDD19} which inserts elements by hashing them to
a table entry and then scans linearly for a free
slot.  Linear probing can be made highly space
efficient but only at the price of high search
times. Suppose the table has $m=(1+\epsilon)n$
slots available to store a set $S\subseteq E$ of $n$ elements.
Then the expected insertion time (and search time
for elements not in $S$) is $\Om{1/\epsilon^2}$ because
long clusters of full table entries are formed.

Slick defuses this problem by allowing some
elements to be \emph{bumped} to a next layer of
the data structure (often called a
\emph{backyard}) \cite{broder1990multilevel,FPSS05,arbitman2010backyard,bender2021all,pandey2022iceberght}.
By \emph{overloading} the table, i.e., choosing $m<n$,
this helps to fill it almost completely while only bumping
a small fraction of the elements.
Slick also stores metadata that
maps keys to a narrow range of table entries that
can possibly contain an element with that key. Thus, worst
case constant query time can be achieved.  It
turns out that the overhead for storing metadata
with bumping and location information is very
small.  Indeed, we conjecture that a variant of
Slick allows succinct storage of $S$ with space
$\log\binomial{|E|}{n}(1+o(1))$ while maintaining
constant time for all basic operations -- see
\cref{ss:succinct}.

After introducing some preliminaries in
\cref{s:prelim}, we introduce basic Slick in
\cref{s:basic}. \Cref{s:advanced} discusses
several variants and advanced features, including
a succinct representation. After discussing
related work in \cref{s:related}, we conclude in
\cref{s:conclusion} with some possible avenues for
future work.

\section{Preliminaries}\label{s:prelim}

A hash table stores a set $S\subseteq E = K \times V$ of
$n=|S|$ key-value pairs for arbitrary universes $K$ and $V$. The pairs are also called elements.
Every key may only appear in one element, i.e.\ $S$ is a functional relation.
%
Closed hashing stores these elements in an array
$T[0..m-1]$ providing space for $m$
elements.\footnote{In this paper, $a..b$ is a
shorthand for $\set{a,\ldots,b}$.}  A hash
function $h$ applied to the keys helps finding
these elements. In the following, we assume that
$h$ behaves like a truly random hash function.

Our model of computation is the standard RAM model
\cite{Sheperson-Sturgis} allowing constant time
operations on operands of size $\Oh{\log n}$ where
$n$ is the input size (see
e.g. \cite[Section~2.2]{SMDD19short} for details).

\section{Basic Sliding Block Hashing}\label{s:basic}
In this section, we first introduce the basic data structure and the find operation in \cref{s:find}.
Then, we explain the insertion operation in \cref{s:insertion} and the bulk construction in \cref{s:bulkConstruction}.
Finally, we give details on the deletion operation in \cref{ss:deletion}.

\subsection{The Data Structure and Operation \Id{find}}\label{s:find}

The basic idea behind Slick is very simple.  We
try to store most elements in table $T$ as in
closed hashing. The main hash function $h$ maps
elements to the range $0..m/B-1$, i.e., to
\emph{blocks} for which $T$ has an average
capacity of $B$ available. Ideally (and
unrealistically), block $b_i$ would contain up to
$B$ elements and it would be stored in table entries
$T[iB..iB+B-1]$. Slick makes this realistic by
storing \emph{metadata} that indicates the
deviation from this ideal situation.

How to implement this precisely, opens a large
design space. We now describe a simple solution
with a number of tuning parameters.  The elements
of $S$ mapped to block $b_i$ are stored
contiguously in a range of table entries starting
at position $iB+o_i$ where $o_i$ is the
\emph{offset} of $b_i$ -- blocks are allowed to
\emph{slide}. After this block, there may be a
\emph{gap} of size $g_i$
of unused table
cells before the next block starts. Metadata
explicitly stores the gap size.%
\footnote{This can be implemented using very
little additional space: We only need a single
code-word for the metadata of a block to indicate
that this block has a nonzero gap. In that case,
we have an entire unused table cell available to
store the gap size and the remaining metadata for
that block.  In particular, if we have $k$-bit
thresholds, we can use $2^k+1$ threshold values
and set $\maxOffset=2^k-2$. We then have
$|0..\maxThreshold\times 0..\maxOffset|=2^{2k}-1$
leaving one code word for encoding a nonempty gap.\psfrage{shortened footnote}}
This has the added benefit, that, in contrast to
previous closed hashing schemes, there is no need
to explicitly represent an empty element.
\begin{figure}[t]
  \centering
  \includegraphics[scale=0.9]{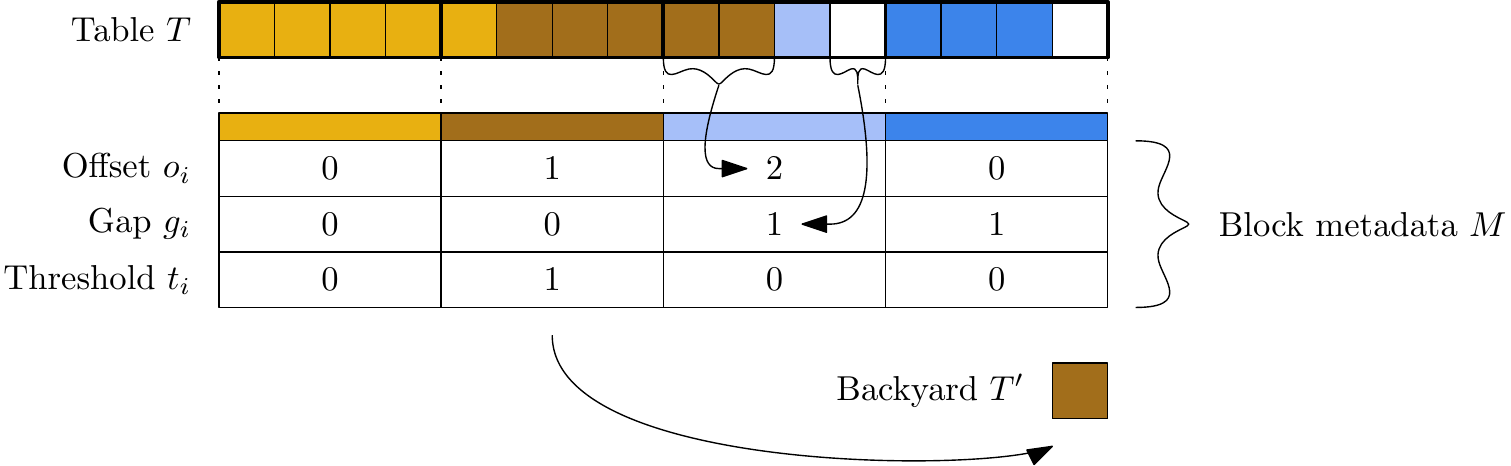}
  \caption{\label{fig:illustration} Illustration
    of the data structure. Input objects are
    colored by their hash function value. $B=4,
    \hat B=5, m=16, n=15$.}
\end{figure}

Metadata should use very little space.
We therefore limit the maximum offset to a value $\maxOffset$.
We also want to support fast search and therefore limit the block size
by parameter $\maxB$. With these constraints on position and size of blocks,
it may not be possible to store all elements of the input set $S$ in table $T$.
We therefore allow some  elements to be
\emph{bumped} to a \emph{backyard} $\backyard$.%
\footnote{The term ``bumped'' comes from the practice
of airlines to bump passengers if overbooking
leads to doubly sold seats.}  The backyard can be
any hash table. Since, hopefully, few elements
will be bumped, space and time requirements of the
backyard are of secondary concern for now. We
adopt the approach from the BuRR retrieval data
structure \cite{DHSW22} to base bumping decisions
on \emph{thresholds}: A threshold hash function
$\threshold(k)$ maps keys to the range
$0..\maxThreshold-1$. Metadata stores a threshold $t_i\in 0..\maxThreshold$
for block $b_i$ such that elements with
$\threshold(k)<t_i$ are bumped to $T'$.  We also use
the observation from BuRR that \emph{overloading}
the table, i.e., choosing $m<n$ helps to arrive at
tables with very few empty cells.
\hpfrage{Mention bumping of separator hashing here already?}

The pseudocode in \cref{alg:slick} summarizes a
possible representation of the above scheme and
gives the resulting search \hpfrage{search vs locate vs find} operation.
\Cref{fig:illustration} illustrates the data structure.
The
metadata array $M$ contains an additional slot
$M[m/B]$ with a sentinel element helping to ensure
that block ends can always be calculated and that
no elements outside $T$ are ever
accessed.\footnote{We could also wrap these blocks
back to the beginning of $T$ as in other closed
hashing schemes or extend $T$ to accommodate slid blocks.} This implementation has tuning
parameters $m$, $B$, $\maxB$, $\maxOffset$, and
$\maxThreshold$.  We expect that values in
$\Th{B}$ will be good choices for $\maxB$,
$\maxOffset$, and $\maxThreshold$.  Concretely, we
could for example choose $\maxB=2B$ and
$\maxOffset=\maxThreshold=B$.  This leads to space
overhead of $\Oh{\log B}$ bits for the metadata of
each block which can be amortized over $B$
elements.


\begin{figure}
  \begin{code}
    \Class \Id{SlickHash}($m, B, \maxB, \maxOffset, \maxThreshold:\nplus$, $h:E\rightarrow 0..m/B-1$)\+\\
      \Class \Id{MetaData} = $\overbrace{o:0..\maxOffset}^{\text{offset}}\times \overbrace{g:0..\maxB}^{\text{gap}} \times \overbrace{t:0..\maxThreshold}^{\text{threshold}}$\\[2mm]
      $T$ : \Array$[0..m-1]$ \Of $E$\RRem{main table}\\
      \DeclareInit{$\meta$}{\Array$[0..m/B]$ \Of \Id{MetaData}}{$(0,B,0)^{m/B}\circ(0,0,0)$}\\ 
      \backyard : \Id{HashTable}\RRem{backyard}\\[2mm]

      \Function \Id{blockStart}($i$: $\nat$) \Return $Bi+o_i$\\
      \Function \Id{blockEnd}($i$: $\nat$) \Return $Bi+B+o_{i+1}-g_i-1$\\
      \Function \Id{blockRange}($i$: $\nat$) \Return $\Id{blockStart}(i)..\Id{blockEnd}(i)$\\[2mm]
      
      \Rem{locate an element with key $k$ and return a reference in $e$}\\
      \Funct{\Id{find}}{$k$: $K$, $e$: $E$}{\Boolean}\+\\
        $i\Is h(k)$\\
        \If $\threshold(k)< t_i$ \Then \Return \backyard.\Id{find}$(k, e)$\RRem{bumped?}\\
        \If $\exists j \in\Id{blockRange}(h(k))\gilt \Id{key}(T[j])=k$ \Then
           $e\Is T[j]$;\quad \Return \True\RRem{found}\\
        \Else \Return \False\RRem{not found}
  \end{code}
  \caption{\label{alg:slick}Pseudocode for a
    simple representation of Slick and the
    operation \Id{find}.  \hpfrage{I think the
      outer class specification can be
      removed. Especially the definition of
      MetaData is not really used as it is
      defined. Instead, $o_i,g_i,t_i$ are accessed
      directly. This just makes it longer and
      therefore more ``scary'' to
      read.}\psfrage{Perhaps. I tend to keep it at
      least for the tr as it fills in some gaps
      like ranges of variables, initialization
      etc. The class declration itself collects
      the set of tuning paramters in one place.}
    \hpfrage{The code feels quite
      low-level. Instead of setting a function
      parameter to a reference of an object and
      returning true/false, I would just return
      the object itself in the pseudocode (or
      $\bot$). This makes it easier to
      understand.}\psfrage{I actually had
      that. However this might be confusing as as
      in the text I proudly explain that we do not
      need a special element $\bot$. We could have
      a return type $\bot\cup E$ but one would
      usually not implement it that way in C++
      (but in Rust?). }\hpfrage{For pseudocode I would not care too much if this is close to the implementation in C++ and instead focus on easy readability. The variant that can return $\bot$ is easier to read.}  }
\end{figure}

\subsection{Insertion}\label{s:insertion}

\begin{figure}
  \begin{code}
      \Procedure \Id{insert}($e$: $E$)\RRem{insert element $e$, noop if already present}\+\\
        $k\Is \Id{key}(e)$;\quad $i\Is h(k)$\RRem{current block}\\
        \If $\threshold(k)< t_i$ \Then \backyard.\Id{insert}$(e)$;\quad \Return\RRem{$e$ is already bumped}\\
        \If $\exists j \in\Id{blockRange}(h(k))\gilt \Id{key}(T[j])=k$ \Then \Return\RRem{already present}\\
        \If $\overbrace{|\Id{blockRange}(i)|=\maxB}^{\text{block too large}}$ \Or $\overbrace{\Not(g_i>0\,||\,\Id{slideGapFromLeft}(i) \,||\, \Id{slideGapFromRight}(i))}^{\text{no empty slot usable}}$ \Then\+\\
          \Rem{bump $e$ or some element from block $b_i$}\\
          $t'\Is 1+\min\setGilt{\delta(x)}{x\in \set{k}\cup\setGilt{\Id{key}(T[j])}{j\in \Id{blockRange}(i)}}$\\
          $t_i\Is t'$\\
          $j\Is \Id{blockStart}(i)$\\
          \While $j\leq \Id{blockEnd}(i)$ \Do\+\RRem{Scan existing elements. Bump them as necessary}\\
            \If $\threshold(\Id{key}(T[j]))<t'$ \Then\+\\
              $\backyard.\Id{insert}(T[j])$\RRem{move to backyard}\\
              $T[j]\Is T[\Id{blockEnd}(i)]$;\quad
              $g_i\Increment$\-\RRem{remove from $T$}\\
            \Else $j\Increment$\-\\
          \If $\threshold(k)< t'$ \Then $\backyard.\Id{insert}(e)$; \quad\Return\-\\
        $g_i\Decrement$;\quad
        $T[\Id{blockEnd}(i)]\Is e$\RRem{insert $e$ into an unused slot}\\
        \Return\-\\[2mm]  

      \Rem{Look for a free slot to the right and move it to block $b_i$ if successful}\\
      \Funct{\Id{slideGapFromRight}}{$i_0$: $\nat$}{\Boolean}\+\\
        $i \Is i_0$\\
        \While $g_i = 0$ \Do\RRem{look for a free slot}\+\\
          \If $i\geq m/B\vee o_i=\maxOffset$ \Then\Return \False\RRem{further sliding right is impossible}\\
          $i\Increment$\-\\
        $g_i\Decrement$\\
        \While $i>i_0$ \Do\+\RRem{shift free slot towards block $b_{i_0}$}\\
          \Rem{Slide $b_i$ to the right}\\
          $T[\Id{blockEnd}(i)+1]\Is T[\Id{blockStart}(i)]$\\
          $o_i\Increment$\\
          $i\Decrement$\-\\
        $g_i\Increment$\\
        \Return \True
  \end{code}
  \caption{\label{alg:insert}Pseudocode for insertion into Slick
    Hash Tables. Function \Id{slideGapFromLeft}
    works similarly to function
    \Id{slideGapFromRight} -- it scans left looking
    for a free slot, failing if it encounters an
    offset of 0. It slides blocks left by moving
    their last element to the last free slot of
    the previous block thus decreasing their offset.}
\end{figure}
\begin{figure}
  \centering
  \includegraphics[scale=0.9]{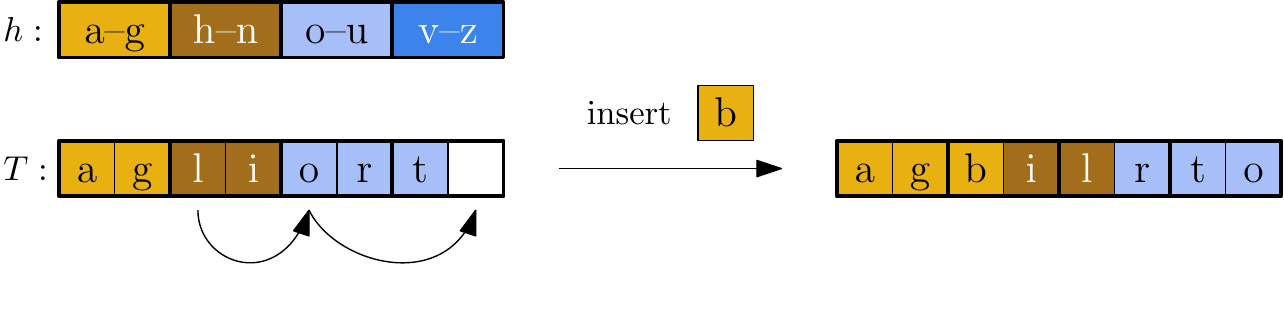}
  \caption{\label{fig:insert}Example for inserting
    $b$ into a Slick hash table with $B=2$ that
    previously stored
    $S=\set{a,l,g,o,r,i,t,h,m}$. }
\end{figure}

To insert an element $e$ with key $k$ into a Slick
hash table, we 
first find the
block $i=h(k)$ to which $e$ is mapped.  Then we check
whether the threshold for block $i$ implies
that $e$ is bumped. In this case, $e$ must be
inserted into the backyard $\backyard$.
Otherwise, the data structure invariants give us
quite some flexibility how to insert an element.
This implies some case distinctions but also allows
quite efficient insertions even when the
table is already almost full.

A natural goal is to
insert $e$ into block $b_i$ if this is possible
without bumping other elements.  The pseudocode in
\cref{alg:insert} describes one way to do this.
This goal is unachievable
when $b_i$ already contains the maximal number
of $\maxB$ elements. In that case, we must bump
some elements to make room for $e$.\footnote{One
could think that the most simple solution it to
bump $e$ itself. However, this might incur
considerable ``collateral damage'' by additionally
bumping all elements from $b_i$ with smaller
or equal threshold.} Once more there are many ways to achieve this.
We describe a fast and simple variant.%
\footnote{This variant has the disadvantage that
it sometimes bumps several elements. A more
sophisticated yet more expensive variant could
look for neighboring blocks where bumping a
single element is possible.}  We look for the
smallest increase in the threshold of $b_i$
that bumps at least one element from that block
(including the new element $e$ itself).  We set
the threshold accordingly and bump the elements
implied by this change.  Now, either $e$ is bumped
and we are done with the insertion or a free slot
after $b_i$ is available.

If block $b_i$ is not filled to capacity $\maxB$, we
try to insert $e$ there.  However, this may not be
directly possible because the gap behind $b_i$
could be empty. In that case we can try to slide
neighboring blocks to open such a gap.
Algorithm~\ref{alg:insert} does that by first
trying to slide $b_i$ and some of its left
neighbors by one position to the left. This may
fail because before finding a nonempty gap, a
block with offset 0 may be encountered that cannot
be slid to the left.%
\footnote{Indeed, this will always fail when we
use no deletions. Hence, the attempt to slide left
can be omitted in that situation.}  If sliding left failed,
function \Id{slideGapFromRight} tries to slide
blocks to the right of $i$ by one position to the
right. Once more, this may fail because blocks
that already have maximum offset cannot be slid
to the right. If there is a range of slidable
blocks starting a $b_{i+1}$ and ending at a
block followed by a nonempty gap, the actual
sliding can be done efficiently by moving only one
element per block. \Cref{fig:insert} gives an example. The first element is appended
to the block, filling the first gap element and
growing the gap of the previous block.

If neither sliding left nor sliding right can open
a gap after block $b_i$, the same bumping procedure
described for full blocks is used.  If sliding was
successful (including the case of an empty
right-slide for the case that $b_i$ already
had a nonempty gap), the gap after $b_i$ is
nonempty and element $e$ can be appended to $b_i$.

\begin{corollary}\label{cor:insertTime}
  A call of insert takes time
  $\Ohsmall{\maxB+s}+T_{\mathrm{bump}}$ where $s$
  is the number of considered blocks and where
  $T_{\mathrm{bump}}$ is the time incurred in the
  backyard.
\end{corollary}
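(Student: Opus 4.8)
The plan is to walk through the pseudocode of \cref{alg:insert} and charge every line to one of the three terms $\maxB$, $s$, or $T_{\mathrm{bump}}$. Note first that \Id{blockStart}, \Id{blockEnd}, and hence \Id{blockRange}, cost $\Oh{1}$ to evaluate, since the relevant offsets and gap sizes are stored explicitly in $\meta$. The prologue --- computing $k$ and $i$ and testing $\threshold(k)<t_i$ --- is therefore $\Oh{1}$; if the test fires, the remaining work is a single backyard insertion, which is subsumed in $T_{\mathrm{bump}}$, and we are done.

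Otherwise, both the duplicate check and the computation of $t'$ iterate over $\Id{blockRange}(i)$, which by the block-size invariant contains at most $\maxB$ entries (the ``block too large'' test only guarantees $|\Id{blockRange}(i)| \le \maxB$), so each costs $\Oh{\maxB}$. The bumping \textbf{while}-loop scans the block from $\Id{blockStart}(i)$ to $\Id{blockEnd}(i)$: in each iteration we either advance $j$ or remove an element from the block, the latter strictly decreasing $\Id{blockEnd}(i)$; hence the loop terminates after at most $2\maxB$ iterations, each performing $\Oh{1}$ local work plus at most one backyard insertion. Collecting all backyard insertions (here, in the early-bump case, and in the final bump of $e$) into the aggregate $T_{\mathrm{bump}}$ leaves $\Oh{\maxB}$ for this part. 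Appending $e$ behind $b_i$ together with the accompanying gap and offset updates is again $\Oh{1}$.

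It remains to bound \Id{slideGapFromLeft} and \Id{slideGapFromRight}. Each consists of an outward scan --- walking from $b_{i_0}$ towards higher (resp.\ lower) indices until a nonempty gap is found or the attempt fails at an extremal offset --- followed by an inward scan that moves exactly one element and updates one offset per block on the way back. Thus every block inspected is touched $\Oh{1}$ times with $\Oh{1}$ work each, and by definition the total number of such blocks is $s$; the two routines together cost $\Oh{s}$. Summing the three contributions yields the claimed bound $\Ohsmall{\maxB+s}+T_{\mathrm{bump}}$.

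The argument is essentially a routine line-by-line charge; the only point needing a moment's care is the termination count for the bumping loop, where one must notice that refilling a vacated slot with the current last element of the block does not reintroduce work because $\Id{blockEnd}(i)$ decreases monotonically. Everything else follows directly from the $\Oh{1}$-evaluability of the block-boundary functions and the invariant $|\Id{blockRange}(i)| \le \maxB$.
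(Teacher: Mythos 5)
Your line-by-line charging argument is correct and matches the reasoning the paper leaves implicit: the corollary is stated without an explicit proof, being regarded as immediate from the pseudocode, where in-block work (duplicate check, threshold computation, bumping scan) is $\Ohsmall{\maxB}$, each slid block costs $\Oh{1}$ giving the $\Ohsmall{s}$ term, and all backyard operations are absorbed into $T_{\mathrm{bump}}$. Your explicit termination argument for the bumping loop is a nice touch but does not change the approach.
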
\psfrage{can we show that the expectation of
  $s$ is $\Oh{1+\frac{m}{B\cdot\Id{emptySlots}}}$,
  implying that the cost is $\Oh{B}$ when
  $\Id{emptySlots}=\Om{m/B^2}$? I am afraid though
  that at some point we also suffer from
  clustering as in linear probing. But this will
  set in later once blocks with empty slots get
  scarce. Sth like query time $1/(B^2(1-n/m)^2)$?}

The insertion routine described so far is quite
effective in filling the table, however, this gets
expensive when the table is almost
full. Therefore, in cases where this can happen,
one should use a less aggressive insertion routine
that bumps if there is no nearby empty slot.  What
``nearby'' means can be controlled with a tuning
parameter $\hat{s}$ that controls the maximal number of
blocks to slide. \Cref{cor:insertTime} implies that $\hat{s}=\Ohsmall{\maxB}$
might be a good choice.

We can also consider a more sophisticated
insertion routine supported by additional
metadata.  When routine \Id{insert} from
\cref{alg:insert} fails to find a gap to the left
or the right, it has identified a \emph{blocked
cluster (bluster)} of full blocks starting with a block with
offset $0$ and ending with a block with offset
$\maxOffset$\psfrage{More precisely, we are looking at maximal such clusters. An insertion already terminates at the first block with maximal overlap it encounters. The bluster will in general extend further to the right.}. When the table is almost full,
blusters can be large and they can persist
over many insertions.%
\footnote{In the absence of deletions, blusters can
only be broken when bumping happens to bump at
least two elements.}
Thus, it might help to mark blocks inside blusters.
An insertion into a bluster can then immediately bump.
If blocking flags are represented by a separate bit array, they
can be updated in a bit parallel way.

\subsection{Bulk Construction}\label{s:bulkConstruction}
\begin{figure}
  \begin{code}
    \Procedure \Id{greedyBuild}($S$: \Id{Sequence} \Of $E$)\RRem{build a Slick hash table from $S$}\+\\
      $\Id{bumped}\Is\seq{}$\RRem{bumped elements}\\
      sort $S$ lexicographycally by $(h(e),\threshold(e))$\\
      $o\Is 0$\RRem{offset}\\
      \ForFromTo{i}{0}{m/B-1}\RRem{for each block}\+\\
        $b\Is\seqGilt{e\in S}{h(\Id{key}(e))=i}$\RRem{extract block $b_i$ from S}\\
      $t\Is 0$\RRem{threshold for $b_i$}\\[-3mm]
        \Id{excess}\Is$\max(\overbrace{|b|-\maxB}^{|b|\leq\maxB}, \overbrace{o+iB+|b|-m}^{|T|=m}, \overbrace{o+|b|-B-\maxOffset}^{o_{i+1}\leq\maxOffset})$\\
        \If $\Id{excess}>0$ \Then\+\\
          \ForFromTo{j}{1}{\Id{excess}} $\Id{bumped}.\Id{pushBack}(b.\Id{popFront})$\RRem{bump to fit}\\
          $t\Is \threshold(\Id{bumped}.\Id{last})+1$\RRem{adapt threshold}\-\\
        \While $|b| > 0 \wedge \delta(b.\Id{front}) < t$ \Do\+\\
          $\Id{bumped}.\Id{pushBack}(b.\Id{popFront})$\-\\
        $M[i]\Is (o,\max(0,B-o-|b|),t)$\RRem{write metadata for $b_i$}\\
        \ForFromTo{j}{0}{|b|-1} $T[iB+o+j]\Is b[j]$\RRem{write $b_i$ to $T$}\\
        $o\Is\max(0, o+|b|-B)$\RRem{next offset}\-\\
      $M[m/B] \Is (0,0,0)$\RRem{sentinel metadata}\\
      $\backyard.\Id{build}(\Id{bumped})$
  \end{code}
  \caption{\label{alg:build}Pseudocode for greedy bulk construction of Slick hash tables. }
\end{figure}

\Cref{alg:build} gives pseudocode for a simple
greedy algorithm for building a Slick hash table.
The elements are sorted by block (and threshold)
and then processed block by block.  The algorithm
tries to fit as many elements into the block as
permitted by the constraints that a block must
contain at most $\maxB$ elements, that the
offset must not exceed $\maxOffset$, and that the last available table cell is $T[m-1]$.  Violations
of these constraint are repaired by bumping a
minimal number of elements from the current block.

\begin{theorem}
  Construction of a Slick hash table using
  \Id{greedyBuild} can be implemented to run in
  deterministic time $\Oh{|S|}$ plus the time for
  constructing the backyard. As an external memory algorithm,
  greedy build has the same I/O complexity as sorting.
\end{theorem}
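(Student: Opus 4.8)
The plan is to decompose \Id{greedyBuild} into three parts and bound each: the initial sort of $S$, the single block-by-block sweep that fills $T$ and the metadata array $M$, and the final call $\backyard.\Id{build}(\Id{bumped})$, which we treat as a black box covered by the ``plus the time for constructing the backyard'' clause. The guiding observation is that, once $S$ is sorted, every remaining step is a sequential scan, so both the RAM running time and the external-memory I/O volume are dominated by the sort.

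\textbf{Sorting.} Write $n \Def |S|$. The composite sort key of an element $e$ is the pair $(h(e),\threshold(e))$ with $h(e)\in 0..m/B-1$ and $\threshold(e)\in 0..\maxThreshold-1$. In the regime of interest both coordinates are polynomially bounded in $n$ (indeed $m/B\le m$, and the useful parameter choices have $m=\Oh{n}$ and $\maxThreshold=\Oh{B}$), so the key fits into $\Oh{\log n}$ bits and hence into a constant number of machine words. Therefore the sort can be realized by least-significant-digit radix sort with base $n$ — equivalently two stable counting-sort passes, first on $\threshold$ and then on $h$ — in deterministic time $\Oh{n + m/B + \maxThreshold}=\Oh{n}$. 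In the external-memory model the same sort costs $\Id{sort}(n)$ I/Os, which is exactly the claimed bound; no other step will exceed it.

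\textbf{The main loop.} After sorting, the elements mapped to a given block $b_i$ form a contiguous run of $S$, ordered within the run by threshold. Thus the loop over $i$ is implemented as one left-to-right sweep of $S$ with a single advancing read pointer: extracting $b_i$ touches each of its elements once, \Id{popFront} on $b$ returns precisely the smallest-threshold element (so bumping the required number is just a prefix of the run), and writing $b$ back to $T[iB+o..]$ touches each surviving element once. The per-block bookkeeping — evaluating \Id{excess}, updating $o$, writing $M[i]$ — is $\Ohsmall{1}$. Summing over blocks gives $\sum_i \Ohsmall{1+|b_i|}=\Oh{m/B+n}=\Oh{n}$ time. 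Crucially this bound is oblivious to the hash distribution: an over-full block $b_i$ merely causes more of its $|b_i|$ elements to be pushed onto \Id{bumped}, still at cost $\Ohsmall{|b_i|}$, so an adversarial input can only redistribute — never inflate — the $\Oh{n}$ total work, which is what makes the running time deterministic rather than merely expected. Finally, the writes to $T$ go to strictly increasing addresses (consecutive blocks are laid out left to right without overlap), as do the writes to $M$ and the appends to \Id{bumped}; hence in external memory the loop is a constant number of sequential scans, $\Oh{\Id{scan}(n)}$ I/Os, dominated by the sort.

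\textbf{Main difficulty.} There is no deep obstacle here — the statement is really a careful accounting exercise — and the one point that needs attention is the linear-time sorting claim. It relies on the sort keys coming from a universe of size $\Id{poly}(n)$ so that radix/counting sort applies; this is a mild, standard assumption on the tuning parameters and can in any case be folded into the RAM-model convention that operands have $\Oh{\log n}$ bits. The remaining work is mechanical: verifying that each abstract list/array operation in \cref{alg:build} can be charged to $\Ohsmall{1+|b_i|}$ so that the charges telescope to $\Oh{n}$, and confirming that the memory-access pattern after the sort is genuinely sequential so that the external-memory bound reduces to that of sorting.
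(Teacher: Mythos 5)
Your proposal is correct and follows essentially the same route as the paper's own proof: sort $S$ in linear time by LSD radix sort (the paper's footnote likewise notes that bucket sort with $\Oh{n}$ buckets suffices when $\maxThreshold\in\Oh{B}$), observe that the remaining block-by-block sweep is linear, and charge the external-memory cost to sorting since the rest of the computation streams over the sorted data. You simply spell out the per-block $\Ohsmall{1+|b_i|}$ accounting and the sequential-scan I/O pattern that the paper summarizes as ``all other parts are simple to implement in linear time'' and as pipelining the sorter's output into the construction.
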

\begin{proof}
  Using LSD-radix-sort (e.g., see \cite[Section~5.10]{SMDD19short}),
  $S$ can be sorted in linear time.%
  \footnote{Indeed, for $\maxThreshold\in\Oh{B}$
  or in expectation, it suffices to use plain
  bucket sort using $\Oh{n}$ buckets.}  All other
  parts of the algorithm are simple to implement
  in linear time.

  For the external memory variant, we observe that
  sorting the input can pipeline its sorted output
  into the construction process, requiring
  internal memory only for one block at a time.
\end{proof}

\Id{greedyBuild} seems to be a good heuristic for bumping
few elements. We are also optimistic that we can
prove that using slight overloading, the number of
empty cells can be made very small
($me^{-\Omsmall{\maxOffset}}+o(m)$ assuming $\maxOffset = \Oh{B}$ using an
analysis similar to \cite{DHSW22}). Moreover, for
fixed values of the tuning parameters, $B$,
$\maxB$, and $\maxOffset$, we can derive the
expected number of empty cells for large $n$ using
Markov chains. This allows us to study the impact
of different loads $n/m$
analytically.\psfrage{later: describe that Markov
  chain in an appendix.}

However, \Id{greedyBuild} is not optimal.  For
example, suppose we have $B=\maxOffset=4$, $o_2=3$,
$|b_2|=6$, and all elements in block $b_2$ have the same
threshold value. Then we have to bump all elements
from $b_2$ because otherwise
$o_{3}=3+6-4=5>4$ violating the constraint on
the offset of $b_3$.  This leaves an empty
cell after $b_2$.  However, it might have been
possible to bump one additional element from $b_1$ which would have resulted in $o_2=2$ so that no elements would have to be
bumped from $b_2$.  We consider to try heuristics
that avoid empty cells in a block when they
arise, going backwards to bump elements from
previous blocks.  We can also compute an optimal
placement in time $\Oh{n\maxOffset}$ using dynamic
programming%
\footnote{Roughly, we consider the blocks from
left to right and compute, for each possible
offset value $o$, the least number of elements
that need to be bumped to achieve an offset of at
most $o$.}.

\subsection{Deletion and Backyard Cleaning}\label{ss:deletion}

\begin{figure}
  \begin{code}
    \Procedure \Id{delete}($k$: $K$)\RRem{delete the element with key $k$, noop if not present}\+\\
      \If $\threshold(k)<t_i$ \Then $\backyard.\Id{delete}(k)$;\quad\Return\RRem{element can only be bumped}\\
      \If $\exists j \in\Id{blockRange}(h(k))\gilt \Id{key}(T[j])=k$ \Then\+\RRem{found}\\
        $T[j]\Is T[\Id{blockEnd}(h(k))]$\RRem{overwrite deleted element}\\
        $g_{h(k)}\Increment$\RRem{extend gap}
  \end{code}
  \caption{\label{alg:delete}Pseudocode for deletion from Slick
    Hash Tables. }
\end{figure}

Deleting an element is almost as simple as finding
it. We just overwrite it with the last element of
the block and then increment the gap
size. \Cref{alg:delete} gives pseudocode.

A problem that we have to tackle in applications
with many insertions and deletions is that the
routines presented so far bump elements but never
unbump them.  \psfrage{discuss later: iceberg can use
  an unmanaged backyard. I think this will
  not work for Slick as threshold remember
  previously bumped elements even when they have
  been deleted. Discuss that somewhere?}  In a situation with basically stable
$|S|$, the backyard $\backyard$ will therefore keep
growing.  This effect can be countered by
\emph{backyard cleaning}. When there are enough
empty cells in the primary table $T$ to
accommodate $\backyard$, we can reinsert all
elements from $\backyard$ into $T$.  For each
block $b_i$, its threshold can be reset to $0$
unless insertion for a backyard element causes
bumping. An important observation here is that we
never have to bump an element that was in $T$
before backyard cleaning and that we do not have
to look at threshold values of those
elements. This implies that the $\Ohsmall{\maxB}$
term from \cref{cor:insertTime} can be replaced by
the number of backyard elements inserted into the
affected block.  We also expect that we can design
a cleaning operation that works more efficiently
than inserting backyard elements one-by-one.  For
example, this can be done by sorting the backyard similar to the
build operation and then merging backyard and main
table in a single sweep\psfrage{later: design such
  a routine?}.

Now suppose the
backyard has size $\Oh{m/B}$.  Then backyard
cleaning can be implemented to run in expected time
$\Oh{m/B}$.  If we choose the table size in such a
way that $\Oh{m}$ insert or delete operations
cause only $\Oh{m/B}$ bumps, then the backyard and its
management will incur only a factor $\Oh{1/B}$
overhead in space or time.

\section{Variants and Advanced Features}\label{s:advanced}

We begin with two variants of Slick that may be of
independent interest.  Linear cuckoo hashing
described in \cref{ss:luckoo} is almost a special
case of Slick that has advantages with respect to
the memory hierarchy. Bumped Robin Hood hashing from
\cref{ss:ribbonHood} is conceptually even simpler
than Slick and may allow even faster search at the
price of slower insertions.

\Cref{ss:succinct} is a key result of this section
showing how to further reduce space consumption up
to the point that the data structure gets
succinct. Using bit parallelism allows that while
maintaining constant operation
times. \Cref{ss:parallel} briefly discusses how
Slick can be parallelized.

\subsection{\underline{L}inear C\underline{uckoo} (Luckoo) Hashing}\label{ss:luckoo}

{\bf L}inear C{\bf uckoo} (Luckoo) Hashing is closely related to Slick hashing but
more rigidly binds the elements to blocks.
Luckoo hashing subdivides the table $T$ into blocks of size $B$ and 
maintains the invariant that each unbumped element $e\in S$ is either
stored in block $h(e)$ or in block $h(e)+1$.%
\footnote{A generalization could look at $k$ consecutive blocks.}
This can be implemented as a special case of Slick hashing with $\maxOffset=B$, $\maxB=2B$,
and the additional constraint that 
$o_i+|b_i|\leq 2B$. The main advantage over general Slick hashing is that we can now
profit from interleaving metadata with table entries in physical memory blocks of the machine.
This way, a find-operation incurs at most two cache faults.%
\footnote{Note that hardware prefetchers may help to
execute two contiguous memory accesses more
efficiently than two random ones.}
Another potential advantage is that storage of offset and gap metadata is now optional as
the data structure invariant already defines which $2B$ table cells contain a sought element.
This might help with SIMD-parallel implementations.

Luckoo is also useful in the context of truly
external memory hash table, e.g., when used for
hard disks or SSDs. We obtain a dynamic hash table
that is able to almost completely fill the table
and where \Id{find} and \Id{delete} operations
access only 2 consecutive blocks. \hpfrage{Mention similarity with external cuckoo hashing where keys have all except 1 of their choices on the same page?}
Insertions look
at a consecutive range of blocks.  No internal
memory metadata is needed.  Most external memory
hash tables strive to support operations that look
at only a single physical block most of the time.
We can approximate that by subdividing a physical
block into $k$ Slick blocks.  That way, \Id{find}
will only access $1+\frac{1}{k}$ physical blocks
in expectation.%
\footnote{For example, let us consider the case of
an SSD with physical blocks of size 4096, $k=8$,
$B=8$, and 63-bit elements. Then we have enough
space left for 8 bits of metadata per block. On
average, one in 8 \Id{find} operations will have
to access two physical blocks.}
\subsection{Nonbumped Slick -- Blocked Robin Hood Hashing (BloRoHo)}\label{ss:bloRoHo}

Robin Hood hashing \cite{RobinHood} is a variant
of linear probing that reduces the cost of
unsuccessful searches by keeping elements sorted
by their hash function value.  
Slick without bumping can be seen as a variant of Robin
Hood hashing, i.e., elements are sorted by $h(k)$
and metadata tells us exactly where these elements
are.  We get expected search time $\Oh{B}$ and
expected insertion time bounded by
$\Ohsmall{B+T_{\mathrm{RH}}/B}$ where
$T_{\mathrm{RH}}$ is the expected insertion time
of Robin Hood hashing. For  sufficiently filled tables
and not too large $B$, this is faster than basic
Robin Hood hashing. Since the bumped version of Slick
cannot be asymptotically slower, this also gives
us an upper bound on the insertion cost of general Slick.

Actually implementing BloRoHo saves the cost and
complications of bumping but pays with giving up
worst-case constant \Id{find}. It also has to take
care that metadata appropriately represents all
offsets which can get large in the worst case.
Rehashing or other special case treatments may be needed.

\subsection{Bumbed Robin Hood Hashing (BuRoHo)}\label{ss:ribbonHood}

We get another variant of Robin Hood hashing if
we start with basic Robin Hood hashing without offset or gap metadata but
allow bumping.
We can then enforce the invariant  that any
unbumped element $e\in S$ is stored in
$T[h(e)..h(e)+B-1]$ for a tuning parameter $B$.
As in linear probing, empty cells are indicated by
a special element $\bot$. Bumping information
could be stored in various ways but the most
simple way is to store one bit with every table cell $i$
whether elements $e$ with $h(\Id{key}(e))=i$ are bumped.

Compared to Slick, BuRoHo more directly
controls the range of possible table cells that
can contain an element and it obviates the need to
store offset and gap metadata. Searches will
likely be faster than in Slick. However,
insertions are considerably more expensive as we
have to move around many elements and evaluate
their hash functions. In contrast, Slick can skip
entire blocks in constant time using the available
metadata.

\subsection{Succinct Slick}\label{ss:succinct}

\subsubsection{Quotienting}\label{sss:quotient}

Cleary \cite{Cle84} describes a variant of linear
probing that infers $\log m$ bits of a key $k$
from $h(k)$. To adapt to displacements of
elements, this requires a constant number of
metadata bits per element. In Slick we can infer
$\log\frac{m}{B}$ key bits even more easily from
$h(k)$ since the metadata we store anyway already
tells us where elements of a block are stored.

To make this work, the keys are represented in a
randomly permuted way, i.e., rather than
representing a key $k$ directly, we represent
$\pi(k)$. We assume that $\pi$ and its inverse
$\pi^{-1}$ can be evaluated in constant time and
that $\pi:K\rightarrow 0..|K|-1$ behaves like a
random permutation. Using a chain of Feistel
permutations
\cite{LubRac88,NaoRei99,arbitman2010backyard},
this assumption is at least as realistic as the
the assumption that a hash function behaves like a
random function. Now it suffices to store
$k'=\pi(k)\bmod m/B$ in block
$\pi(k)\Div (m/B)$. To reconstruct a key $k$ stored as
$k'$ in block $i$, we compute
$k=\pi^{-1}(im/B+k')$.

With this optimization, the table entries now are stored succinctly except for
\begin{enumerate}
\item $\approx\log B$ bits
  per element that are lost because we only use bucket indices rather than table indices
  to infer information on the keys.
\item On top of this come $\Oh{\log(B)/B}$ bits per element of metadata.
\item Space lost due to empty cells in the table.
\item Space for the backyard.
\end{enumerate}
Items 1. and 2. can be hidden in $o(1)$ allowed for
succinct data structures. Items 3. and 4. become lower order terms
when $B\in \omega(1)$ \hpfrage{Should this be little-omega? Seems to be quite uncommon notation.}. Below, we will see how to do that while still
having constant time operations.

\subsubsection{Bit Parallelism}\label{sss:bitParallel}

With the randomly permuted storage of keys, any
subset of $f$ key bits can be used as a fingerprint
for the key.  If we take $\geq \log \maxB$ such
fingerprint bits, the expected number of
fingerprint collisions will be constant.
Moreover, for $f=\Ohsmall{\log \maxB}$, and
$\maxB=\Oh{\log(n)/\log\log n}$, we can use bit parallelism to process all 
fingerprints of a block in constant time.
To also allow bit parallel access to the right fingerprints, we have to store the
fingerprints separately from the remaining bits of the elements.%
\footnote{For a RAM model implementation of Slick,
it seems most simple to have a separate
fingerprint array that is manipulated analogously
to the element array. Since this can cause
additional cache faults in practice, we can also
use the Luckoo variant from \cref{ss:luckoo} with
a layout where a physical (sub)block contains
first metadata then fingerprints and finally the
remaining data for exactly $B$ elements. A
compromise implements general Slick with
separately stored metadata (hopefully fitting in
cache) but stores $B$ fingerprints and $B$
elements in each physical block.}

In algorithm theory, bit parallelism can do many
things by just arguing that lookup tables solve
the problem. However, this is often impractical
since these tables incur a lot of space overhead
and cache faults while processing only rather
small inputs. However, the operations needed for
bit parallel Slick are very simple and even
supported by SIMD-units of modern microprocessors.

Specifically, \Id{find} and \Id{delete} need to replicate the
fingerprint of the sought key $k$ and then do a
bit parallel comparison with the fingerprints of
block $h(k)$.  Operation \Id{insert} additionally
needs bit parallism for bumping.  By choosing
fingerprints large enough to fully encode the
thresholds $\delta(\cdot)$, we can determine the
elements with minimal fingerprint in a block using
appropriate vector-min and vector-compare
operations. The elements with minimal fingerprint
then have to be bumped one at a time which is
possible in constant expected time as the expected
number of minima will be a constant for $f\geq\log \maxB$.

We can now extend \cref{cor:insertTime}:
\begin{corollary}
  For $\maxB=\Oh{\log(n)/\log\log n}$ and $\log |E|=\Oh{\log n}$, operations
  \Id{find} and \Id{delete} can be
  implemented to work in constant expected time.
  The same holds for operation $insert$ if a variant is
  chosen that slides a constant number of blocks in expectation.
\end{corollary}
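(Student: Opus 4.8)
The plan is to bound each operation separately, reducing everything to three facts: the structural cost bound from \cref{cor:insertTime}, the bit-parallel primitives described in \cref{sss:bitParallel}, and a pair of probabilistic claims about how much work the table itself (as opposed to the backyard) generates. First I would handle \Id{find} and \Id{delete}, which are the easy cases. By \cref{alg:slick} and \cref{alg:delete}, both operations evaluate $h(k)$ and $\threshold(k)$ in constant time, consult $M[i]$, and then either recurse into the backyard or scan $\Id{blockRange}(i)$. Since $|\Id{blockRange}(i)|\le \maxB=\Oh{\log(n)/\log\log n}$ and, under the permuted-storage convention of \cref{sss:quotient}, each element occupies $\Oh{\log|E|}=\Oh{\log n}$ bits, an entire block fits in $\Oh{1}$ machine words; replicating the fingerprint of $k$ and doing a single bit-parallel comparison against all $\le\maxB$ fingerprints of block $h(k)$ locates a candidate slot in constant time, and one final full-key comparison confirms it. The only remaining contribution is the recursive backyard call, which costs $T_{\mathrm{bump}}$; choosing for the backyard any hash table with expected-constant \Id{find}/\Id{delete} (and, by the discussion in \cref{ss:deletion}, keeping it of size $\Oh{m/B}$ via backyard cleaning so its management is cheap) makes this $\Oh{1}$ in expectation. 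Hence \Id{find} and \Id{delete} run in constant expected time.

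For \Id{insert} I would start from \cref{cor:insertTime}, which gives time $\Ohsmall{\maxB+s}+T_{\mathrm{bump}}$ where $s$ is the number of blocks considered during sliding. The hypothesis of the corollary is exactly that we use the tamed insertion variant sliding $s=\Oh{1}$ blocks in expectation — concretely the $\hat s=\Ohsmall{\maxB}$-bounded routine discussed after \cref{cor:insertTime}, or the bluster-flag variant — so the $\Ohsmall{s}$ term is $\Oh{1}$ and the $\Ohsmall{\maxB}$ term is $\Oh{\log(n)/\log\log n}$ words of scanning, again $\Oh{1}$ machine operations. The bumping step inside \cref{alg:insert} is where bit parallelism is essential: I would invoke \cref{sss:bitParallel} to argue that, with fingerprints chosen wide enough to encode $\threshold(\cdot)$ fully (still $f=\Ohsmall{\log\maxB}$ bits, so a whole block's fingerprints fit in a constant number of words), a vector-min and a vector-compare identify in constant time the set of elements in block $b_i$ attaining the minimum threshold; these must then be evicted to the backyard one at a time, but for $f\ge\log\maxB$ the expected number of such minima is $\Oh{1}$, so the number of backyard insertions triggered is constant in expectation. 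Combined with the $\Oh{m/B}$-bounded, $\Oh{1/B}$-amortized backyard management, this bounds the total expected \Id{insert} cost by $\Oh{1}$.

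The main obstacle is the probabilistic claim that the expected number of threshold-minimizers in a block — and more generally the expected number of backyard operations charged to a single \Id{insert} — is $\Oh{1}$ rather than, say, $\Theta(\maxB)$. This is the point flagged in the marginal note after \cref{cor:insertTime}: when the table is nearly full, clustering effects reminiscent of linear probing could in principle make both $s$ and the number of bumped elements grow. The clean way around this is to lean on the stated hypotheses rather than prove a sharp load-dependent bound: we explicitly restrict to an insertion variant that slides $\Oh{1}$ blocks in expectation (so $s=\Oh{1}$ is assumed, not derived), and we choose $f\ge\log\maxB$ so that, for a fixed block, the fingerprints of its $\le\maxB$ elements are $\maxB$ i.i.d.\ uniform values in a range of size $2^f\ge\maxB$; a standard balls-into-bins argument then gives expected $\Oh{1}$ elements tied for the minimum. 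The remaining care is purely bookkeeping: checking that $\maxB=\Oh{\log(n)/\log\log n}$ together with $\log|E|=\Oh{\log n}$ makes a block plus its fingerprints plus its metadata fit in $\Oh{1}$ words of the RAM model — indeed $\maxB\cdot\log|E|=\Oh{\log n}\cdot\Oh{\log n/\log\log n}$ is $\oh{\log^2 n}$, a constant number of $\Theta(\log n)$-bit words — and that all the SIMD-style primitives invoked (broadcast, vector compare, vector min, mask extraction) are constant-time operations on $\Oh{\log n}$-bit operands, which they are in the standard model cited in \cref{s:prelim}.
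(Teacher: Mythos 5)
Your overall route is the same as the paper's: the corollary is justified there by the discussion immediately preceding it (fingerprint broadcast plus bit-parallel compare for \Id{find}/\Id{delete}, \cref{cor:insertTime} with the sliding term assumed constant for \Id{insert}, vector-min bumping with expected $\Oh{1}$ threshold minima for $f\geq\log\maxB$), and you reproduce all of these ingredients. However, your word-count bookkeeping --- which you present as the step that makes the bit parallelism legitimate --- is wrong. You claim twice that an entire block of elements fits in $\Oh{1}$ machine words, and in the final paragraph you assert that $\maxB\cdot\log|E|=\Oh{\log^2 n/\log\log n}$ bits is ``a constant number of $\Theta(\log n)$-bit words''. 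It is not: it is $\Oh{\log n/\log\log n}=\omega(1)$ words. A block of up to $\maxB$ full elements (each $\Theta(\log n)$ bits, even after quotienting) cannot be processed in a single word operation, and this is precisely why the paper stores the fingerprints \emph{separately} from the elements (footnote in \cref{sss:bitParallel}) and why $\maxB$ is capped at $\Oh{\log(n)/\log\log n}$ rather than $\Oh{\log n}$: only the fingerprint array of a block, of size $\maxB\cdot f=\Oh{\frac{\log n}{\log\log n}\cdot\log\log n}=\Oh{\log n}$ bits, fits in $\Oh{1}$ words.

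The conclusion survives because the correct accounting is exactly the one you use implicitly elsewhere (``a whole block's fingerprints fit in a constant number of words''): \Id{find}/\Id{delete} do the $\Oh{1}$-word fingerprint comparison and then verify an expected-constant number of fingerprint matches against full keys, each in $\Oh{1}$ time since $\log|E|=\Oh{\log n}$; \Id{insert} likewise only needs bit parallelism on fingerprints (which encode the thresholds) plus expected $\Oh{1}$ evictions. So replace the ``entire block in $\Oh{1}$ words'' claim by the fingerprint-only statement and drop the faulty arithmetic; also note that one could not simply enlarge $\maxB$ to $\Theta(\log n)$ under your (incorrect) accounting, which is the structural point your version obscures.
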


We believe that we can show that constant time
operations can be maintained, for example by choosing
$m=n+\Th{n/B}$ when in expectation $\Oh{n/B}$ elements will
be bumped. In this situation, we can afford to
choose a nonsuccinct representation for the
backyard.
Overall, for $B=\Oh{\log(n)/\log\log n}$ (and $\maxB=2B$, $\maxOffset=\Th{B}$, $\maxThreshold>\log B$) we will get
$$\log\binomial{|E|}{n}+\Oh{n\left(
  \overbrace{\log B}^{1.} +
  \overbrace{\frac{\log B}{B}}^{2.} +
  \overbrace{\frac{\log n}{B}}^{3.} +
  \overbrace{\frac{\log n}{B}}^{4.}\right)}=\log\binomial{|E|}{n}+\Oh{n(\log\log
  n)}$$ bits of space consumption.
Deletion and backyard cleaning can be done as described in \cref{ss:deletion}.

Even lower space consumption seems possible if we
overload the table (perhaps $m=n-\Th{B}$) and use
a succinct table for the backyard (perhaps using
non-overloaded Slick).  To maintain constant
expected insertion time, we can scan the metadata
in a bitparallel way in functions
\Id{slideGapFromLeft/Right}.  In addition, we
limit the search radius to $\Oh{B}$ blocks.  Only if we
are successful, we incur a nonconstant cost of
$\Oh{B}$ for actually sliding blocks.  Now
consider an insert-only scenario. The first
$n-\Oh{n/B}$ elements can be inserted in constant
expected time as in the non-overloaded
scenario. The remaining $\Oh{n/B}$ elements will
incur insertion cost $\Oh{B}$, i.e., $\Oh{n}$ in
total.\psfrage{perhaps even better using a more
  detailed analysis of expected insertion time as
  the table fills up.}\psfrage{what todo if we
  allow deletions? Perhaps store an table of free
  slots sorted by the hash function used for
  $\backyard$. After $\Oh{n/B^2}$ deletions, scan
  the backyard in some data parallel way for
  elements that can move there?}



\subsection{Parallel Processing}\label{ss:parallel}

Many \Id{find} operations can concurrently access
a Slick hash table.  Operations \Id{insert} and
\Id{delete} require some kind of locking as do
many other hash table data structures (but,
notably, not linear probing; e.g., see
\cite[Section~4.6]{SMDD19short}). Often locking is
implemented by subdividing the table into an array
of segments that are controlled by one lock
variable.

We can parallelize operation \Id{build} and bulk
insert/delete operations by
subdividing the table into segments so that different threads
performs operations on separate segments of the
table. A simple implementation could enforce independent segments by
bumping data that would otherwise be slid across segment boundaries.
A more sophisticated implementation could use a postprocessing phase that
avoids some of this bumping (perhaps
once more in parallel).


\section{More Related Work}\label{s:related}

There is a huge amount of work on hash tables. We
do not attempt to give a complete overview but
concentrate on approaches that are direct
competitors or have overlapping ideas.

Compared to linear probing
\cite{Pet57,Knu98,SMDD19}, Slick promises faster
operations when the table is almost full.  In
particular, unsuccessful search, insertion and
deletion should profit.  Search and delete also
offer worst case deterministic performance
guarantees when an appropriate backyard is used in
contrast to expected bounds for linear probing%
\footnote{There are high-probability logarithmic bounds
for linear probing that require
fairly strong hash functions though
\cite{thorup2013simple}.}  An advantage for robust
library implementations of hashing is that Slick
does not require a special empty element.
Advantages of linear probing over Slick are
simpler implementation, lockfree concurrent
operation, and likely higher speed when ample
space is available.

\emph{Robin Hood hashing} \cite{RobinHood} improves
unsuccessful search performance of linear probing
at the expense of slower insertion. Slick and our
bumped Robin Hood variant described in
\cref{ss:ribbonHood} go one step further --
bumping obviates the need to scan large clusters
of elements during search and metadata allows
skipping them block-wise during insertion.

\emph{Hopscotch hashing}~\cite{HopscotchHashing}
and related approaches \cite{BThesisPerCellData}
store per-cell metadata to accelerate search of
linear probing. We see Slick as an improvement
over these techniques as it stores much less
metadata with better effect -- once more because,
thanks to bumping, clusters are not only managed
but their negative effect on search performance is
removed.

\emph{Cuckoo hashing}
\cite{CuckooHashing,FPSS05,DieWei07,maier2019dynamic}
is a simple and elegant way to achieve very high
space efficiency and worst case constant
\Id{find}-operations. Its governing invariant is
that an element must be located in one of two (or
more) blocks determined by individual hash
functions. Slick and linear cuckoo hashing
(Luckoo) described in \cref{ss:luckoo} achieve a
similar effect by mostly only accessing a single
range of table cells and thus improve locality and
allow faster insertion.

\emph{Bumping and backyards} have been used in
many forms previously.  \emph{Multilevel
  adaptive hashing} \cite{broder1990multilevel}
and \emph{filter hashing} \cite{FPSS05} bump
elements along a hierarchy of shrinking
layers. These approaches do not maintain explicit
bumping information which implies that \Id{find}
has to search all levels.  In contrast,
\emph{filtered retrieval} \cite{MSSZ14} stores
several bits of bumping information per element
which is fast and simple but requires more space
than the per-block bumping information of Slick.
In some sense most similar to Slick is \emph{bumped ribbon
retrieval (BuRR)} \cite{DHSW22} which is not a
hash table, but a static retrieval data
structure whose construction relies on solving
linear equation systems.

\emph{Seperator hashing}
\cite{larson1984file,gonnet1988external,larson1988linear}
is an approach to external memory hashing that
stores per bucket bumping information similar to
the thresholds of Slick. However, by rehashing
bumped elements into the main table, overloading
cannot be used.  \hpfrage{Linear separator hashing
  \cite{larson1988linear} moves bumped elements to
  the next bucket. That way, deletions are a
  linear sweep similar to Slick.}\psfrage{discuss, I do not understand \cite{larson1988linear}.}  Also, Slick's
approach of allowing blocks to slide allows a more
flexible use of available storage.

\emph{Backyard cuckoo hashing} hashes elements to
statically allocated blocks (called bins there)
and bumps some elements to a backyard when these
blocks overflow. There is no bumping metadata. The
backyard is a cuckoo hash table whose insertion
routine is modified. It tries to terminate cuckoo
insertion by moving elements back to the primary
table. When plugging in concrete values, the space
efficiency of this approach suffers from a large
number of empty table entries.  For example, to
achieve space overhead below 10 \%, this approach
uses blocks of size at least $333$.

\emph{Iceberg hashing} also
hashes elements to statically allocated
blocks. Metadata counts overflowing elements but
this implies that \Id{find} still has to search both
main table and backyard for overflowing blocks.
Iceberg hashing also needs much larger blocks than
Slick since no sliding or other balancing
approaches besides bumping are used.  For example,
a practical variant of iceberg hashing
\cite{pandey2022iceberght} uses $B=64$ and still
has 15 \% empty cells.  A theoretical variant that
achieves succinctness \cite{bender2021all} uses
blocks of size $\Oh{\log^2n}$ and uses complex
metadata inside blocks to allow constant time
search.\psfrage{HP mentions
  \cite{awad2023analyzing}. Should we also
  cite/discuss that one?}

There has been intensive further
theoretical work on achieving succinctness and
constant time operations.  We refer to
\cite{bender2021all,BenderFKKL22} for a recent
overviews. Slick does not achieve all the features
mentioned in these results, e.g., with respect to
stability or high probability bounds. However,
Slick is not only simpler and likely more
practical than these approaches, but may also
improve some of the crucial bounds.  For example,
the main result in
\cite{BenderFKKL22} is a tradeoff between
query time proportional to a parameter $k$ and per
element space overhead of $\Ohsmall{\log^{(k)}n}$ bits where
Slick achieves a similar effect by simply choosing
an appropriate block size without direct impact on
query time. Future (theoretical) work on Slick
could perhaps achieve $\Oh{1}$ bits of overhead per
element by exploring the remaining flexibility in
arranging elements within a block that can encode
$\log B-\Oh{1}$ bits of information per element
using a standard trick of implicit algorithms.
\hpfrage{This feels like it doesn't really belong into the related work section. Move to succinct section?}

\emph{PaCHash} \cite{kurpicz2023pachash} is a
static hash table that allows storage of the
elements without any gaps using a constant number
of metadata bits per block. This even works for
objects of variable size.  It seems difficult
though to make PaCHash dynamic and
PaCHash-\Id{find} is likely to be slower as it
needs predecessor search in an Elias-Fano encoded
monotone sequence of integers in addition to
scanning a block of memory.

\psfrage{also discuss more stuff cited in PaCHash?}

\psfrage{follow references in BuRR?}

\psfrage{folly?}

A technique resembling the sliding approach in
Slick are sparse tables used for representing
sorted sequences \cite{IKR81}.  Here a sorted
array is made dynamic by allowing for empty cells in
the array.  Insertion is rearranging the elements
to keep the gaps well distributed.  Slick can
slide faster as it is not bound to keep them
sorted and bumping further reduces the high
reorganization overhead of sparse tables.

\section{Conclusions and Future Work}\label{s:conclusion}

With Slick (and its variants Luckoo and blocked/bumped Robin
Hood), we have described an approach to obtain
hash tables that are simple, fast and space
efficient. We are in the process of implementing
and analyzing Slick.  This report already outlines
a partial analysis but we need to get a more
concrete grip on how the insertion time depends on
the number of empty slots.

Several further algorithmic improvements and
applications suggest themselves.  In particular,
we believe that Slick can be adapted to be space
efficient also when the final size of the table is
not known in advance. We believe that Slick can be
used to implement a space efficient approximate
membership query filter (AMQ, aka Bloom filter).
Concretely, space efficient AMQs can be
represented as a set of hash values (this can also
be viewed as a \emph{compressed single shot Bloom
filter}) \cite{PSS10}. The successful dynamic
\emph{quotient filter} AMQ
\cite{PPR05,bender2012don,maier2020concurrent} can
be viewed as an implementation of this idea
using Cleary's compact hashing \cite{Cle84}.  Doing this
with Slick instead promises a better
space-performance tradeoff.

On the practical side, an interesting question is
whether Slick could help to achieve better
hardware hash tables, as its simple \Id{find}
function could in large parts be mapped to
hardware (perhaps with a software handled FAIL for
bumped elements).  Existing hardware hash tables
(e.g., \cite{FAK16}) seem to use a more rigid
non-slidable block structure.

\myparagraph{Acknowledgements.}  The authors would
like to thank Peter Dillinger for early
discussions eventually leading to this paper.
This project has received funding from the
European Research Council (ERC) under the European
Union’s Horizon 2020 research and innovation
programme (grant agreement No. 882500). Stefan
Walzer is funded by the Deutsche
Forschungsgemeinschaft (DFG, German Research
Foundation) 465963632.

\begin{center}
  \includegraphics[width=4cm]{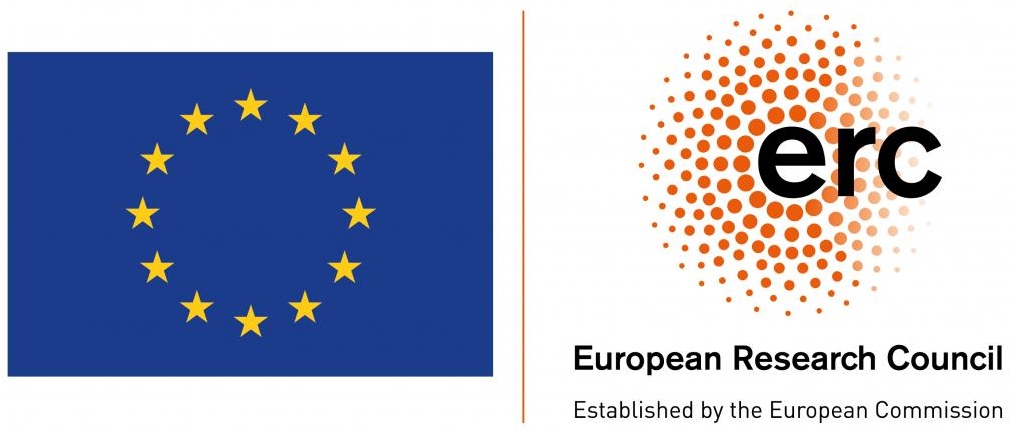}
\end{center}


\bibliographystyle{plainurl}
\bibliography{diss}

\end{document}